\numberwithin{equation}{section}
\numberwithin{figure}{section}
\theoremstyle{plain}
\newtheorem{thm}{\protect\theoremname}
\theoremstyle{definition}
\newtheorem{defn}[thm]{\protect\definitionname}
\theoremstyle{plain}
\newtheorem{lem}[thm]{\protect\lemmaname}
\theoremstyle{plain}
\newtheorem{cor}[thm]{\protect\corollaryname}
\theoremstyle{plain}
\newtheorem{prop}[thm]{\protect\propositionname}
\theoremstyle{plain}
\newtheorem*{fact*}{\protect\factname}
\providecommand{\corollaryname}{Corollary}
\providecommand{\definitionname}{Definition}
\providecommand{\factname}{Fact}
\providecommand{\lemmaname}{Lemma}
\providecommand{\propositionname}{Proposition}
\providecommand{\theoremname}{Theorem}
\begin{document}
\title{Causal Processes in C{*}-Algebraic Setting}
\author{Chrysovalantis Stergiou}
\maketitle
\begin{center}
Department of History, Philosophy and the Ancient World. School of
Liberal Art and Sciences.
\par\end{center}

\begin{center}
The American College of Greece - Deree
\par\end{center}

\begin{center}
cstergiou@acg.edu
\par\end{center}
\begin{abstract}
In this paper we attempt to explicate Salmon's idea of a causal process,
as defined in terms of the mark method, in the context of C{*}-dynamical
systems. We prove two propositions, one establishing mark manifestation
infinitely many times along a given interval of the process, and,
a second one, which establishes continuous manifestation of mark with
the exception of a countable number of isolated points. Furthermore,
we discuss how these results can be implemented in the context of
Haag-Araki theories of relativistic quantum fields in Minkowski spacetime.
\end{abstract}
Keywords: Causation ; Process Theories ; C{*}-Dynamical Systems ;
Local Quantum Physics

\section{Introduction}

Process theories of causation emerged in the first half of the twentieth
century after the advent of the theory of relativity. Reichenbach's
concept of a real sequence, as distinguished from an unreal one in
terms of the method of mark \citep{key-1}, is considered the first
attempt to define \textit{causal processes} as distinguished from
\textit{pseudoprocesses,} while Russell's theory of causal lines \citep{key-2}
is a way to talk about processes in terms of regularities. The real
boost, however, to this approach has been given by the work of W.
Salmon \citep{key-3,key-4} and P. Dowe \citep{key-5,key-6} towards
the end of the century.

Process theories try to explain the cause and effect relation in terms
of mechanisms of generation of causal influence, \textit{causal interactions,}
and of mechanisms of propagation of this influence, \textit{causal
processes. }Both mechanisms are considered spacetime entities, admitting
a geometric representation: causal interactions are localized, they
occur at a place at a time, and they are represented, ideally, by
points of spacetime, while causal processes are locally extended entities
represented geometrically by spacetime curves. Along these curves
the causal influence generated at a spacetime point is propagated.

Various attempts have been made to describe these causal mechanisms
in terms of physical theories and to consider their existence contingent
upon the truth of these theories. Salmon's early theory, which dates
back to 1984, is based on the idea of a mark produced in a process
by a single local intervention, at a spacetime point. The generation
of a mark is considered the prototype of a causal interaction. The
ability of a causal process to transmit a mark at each point of its
spacetime curve, without further assistance, indicates its ability
to transmit causal influence and distinguishes it from a pseudoprocess\textit{.
}Pseudoprocesses, on the other hand, although they exhibit some kind
of uniformity along their geometric representation, they lack the
ability of mark transmission. The motion of a ball in space illustrates
the notion of a causal process while the hit of a bat imparting in
the ball momentum illustrates a causal interaction. On the other hand,
the motion of a shadow or of a light spot qualifies as a pseudoprocess
since any change produced locally at these objects is not propagated,
without further assistance, throughout their history. The connection
between the theory of relativity and this account rests on the impossibility
of a mark to be transmitted at spacelike distance and the consideration
of spacelike curves as possible geometric representations of pseudoprocesses.
Hence, any uniformity manifesting itself in a sequence of events (process)
that \textit{can be }represented geometrically by a spacelike curve
does not qualify as a causal process. In 1997, Salmon, reconsidering
the early version of his theory, has admitted that it provides a \textit{geometric}
account of causation rather than a \textit{physical} one, since no
specification of the nature of causal influence in terms of physical
magnitudes has been given. The recognition of this limitation led
Salmon to the abandonment of the mark method and the specification
of the nature of causal influence, first, in terms of relativistic
invariant quantities, such as the rest mass and the electric charge,
and, later, in terms of conserved quantities (energy, 3-momentum,
etc.), under the influence of Dowe's Conseved Quantity Theory. According
to the later versions of Salmon's theory, a causal interaction is
a local exchange of an invariant or conserved quantity, while a causal
process has the ability to transmit continuously these quantities
from one spacetime point to another in the absence of any further
assistance.

What remained unaltered in the development of Salmon's thought was
his commitment to the idea of spacetime continuity in the transmission
of causal influence by a causal process: once produced, causal influence
is made manifest, either in the form of a mark or in that of a constant
amount of an invariant and/or conserved quantity, at each spacetime
point of the continuous curve. This idea, Salmon claims, provides
a solution to Hume's celebrated problem of causation. It yields the
connection between cause and effect that Hume was unable to trace
at the empirical level. Hume attempted to explain the connection by
interpolating further intermediate causes and effects between the
distant events, yielding, thus, the image of a causal chain connecting
the distant cause with its effect. Nevertheless, as Salmon has pointed
out, this view just multiplies the instances of the problem instead
of solving it, since, now, the connection between each pair of successive
links in the causal chain needs to be accounted for. On the other
hand, were to abandon the discrete picture of distant links in a chain
in favor of a continuous connection, the problem would be solved,
since in a continuum, points are densely distributed and it is meaningless
to talk about the successor of a point. In this way, Salmon developed
the idea of a continuous causal connection between cause and effect.
On the basis of his causal notions, the causal relata are distant
causal interactions connected by a causal process which manifests
continuously, i.e. at each spacetime point of the curve that represents
it geometrically, either the same mark, when the process is marked
at the original interaction, or a constant amount of an invariant
and/or conserved quantity produced at the original interaction, in
the absence of any further interaction.

Dowe's Conserved Quantity Theory, of 1992, is somewhat different from
Salmon's process theory of causation. Firstly, while a causal process
for Salmon, transmits a fixed, non-zero amount of a conserved quantity,
in the absence of interactions, Dowe allows variations of the amount
of conserved quantity manifested along a causal process which needs
not to be non-zero. These variations are either due to causal interactions
or they might be explained in terms of distant action. Thus, although
Salmon's account is firmly committed to action by contact, Dowe's
view leaves room for action at a distance as well. Secondly, Dowe
defines a causal process in terms of objects having identity through
time, while Salmon does not presuppose any account of identity of
an object since it defines causal processes in terms of continuous
manifestations of properties in the absence of external assistance.
Thirdly, the continuity requirement is dropped by Dowe, since he claimed
that an object defining a causal process in terms of its possession
of a conserved quantity may not have continuous existence; admitting,
thus, ``gappy or discontinuous'' causal processes \citep[pp.118-119]{key-6}.
This is a very important difference between the two views, given the
great philosophical importance that Salmon attributes to spacetime
continuity. Fourthly, the explication of the concepts of cause and
effect in terms of causal processes and interactions is different
for Salmon and Dowe. But we need not to further any more about the
differences between these two accounts since in this paper we focus
on the early account of Salmon's theory.

In the following, we attempt to define causal processes in terms of
the mark method, in C{*}-algebraic framework. Firstly, we define a
process to be a C{*}-dynamical system and we identify an instance
of a process in terms of a normal state and an observable (self-adjoint
operator) of the algebra. Secondly, we consider a state transformation,
defined by a non-selective operation on the algebra, as an explication
of a marking interaction of a process. Although all definitions in
this paper do not distinguish between classical and quantum processes,
the results we obtained are restricted by the use of a particular
non-selective operation, suitable only for quantum processes, the
non-selective Lüders measurement. Thus, we use a well-known lemma,
\ref{lem:8}, to provide a necessary and sufficient condition for
the manifestation of the mark in terms of commutation relations between
the projection employed for the marking and the observable (projection)
in which the mark appears (see Cor.\ref{Cor:9}). Fourthly, a causal
process is defined in terms of the ability of a process to transmit
a mark continuously over an interval of values of the parameter of
the one-parameter group of the dynamical system. A second definition,
explicates a weaker notion of continuity of mark transmission, which
demands continuous manifestation of the mark with the exception of
a countable number of isolated points. In this way, we intend to capture
the idea of a 'gappy' process with respect to mark transmission. Then,
we prove two propositions: the first states that for a marking operation
defined in terms of a non-selective Lüders measurement of a given
projection, a mark produced at an early stage of a process, and appearing
at later stage, will appear also in infinite number of intermediate
stages,\ref{prop:11}. The second, states that if we, further, impose
a condition of analyticity for the dynamics of the system on the projection
measured, then 'gappy' processes, in the sense explained, exist, Prop.\ref{prop:13}.

At this point we need to clarify our position regarding these 'gappy',
with respect to mark transmission, processes i.e., those processes
which, if marked, exhibit discontinuities in the mark manifestation,
at particular isolated points along the process. The picture we form
for this sort of a process is that, if marked, it consists of denumerable
parts of continuous transmission of the mark which are delimited by
points of mark disappearance (exempting the upper end of the interval
within which mark transmission occurs). What, then, one would think
of this sort of processes? Would one still be willing to consider
such processes, causal? And, what about the miraculous disappearance
and reappearance of causal influence along a process? Certainly, Salmon
would resist heavily to disregard such brute facts of non-manifestation
of mark and, still, consider the process, causal. It would be as if
he accepted a restoration of the image of a causal chain connecting
the distant cause with its effect, with the links of the chain being
extended over an interval of the process, manifesting continuous transmission
of mark. Thus, were to accept such processes as causal, he would retreat
from a very important contribution of his causal theory, namely, to
deal with the problem of the Humean 'connexion' between the cause
and the effect in terms of the continuity in the transmission of causal
influence. On the other hand, as we have already explained before,
it seems that Dowe does not have a great problem to accept the causal
nature of such processes, despite his denial of the mark method account
of causal processes. 

Far from adhering to Dowe's view, our opinion is that causal processes
should transmit \textit{continuously} causal influence. And the desideratum
of this paper was to explore the possibility of a process that has
the ability to transmit a mark, hence, causal influence, continuously\textit{.
}However, given the restrictions induced by the definitions we have
put forward, the formal results we have obtained, in the C{*}-algebraic
framework, were the best we could do: we preserved some intuition
of continuity, without being able to exclude completely gaps in mark
transmission along the process. So, we suggest the reader to consider
our account a preliminary, possibly insufficient, step in establishing
the possibility of causal processes in the quantum domain. To be sure
to leave open the challenge for the more intuitively plausible continuous
\textit{causal processes}, we dub these 'gappy' processes, \textit{causal-up-to-a-countable-set-of-isolated-points
}(\textit{CSIP-causal})\textit{.}

However, if discussion about causal processes ended here, one would
be entitled to criticize this approach as an attempt to strip process
theories of causation off their mettle, their local character, the
conception of causal processes and interactions as entities that have
a life in space and time. Thus, in the third section of this paper
we attempt to embed the abstract mathematical approach of the second
section, in the context of local quantum physics - in particular,
in that of a Haag-Araki theory of relativistic quantum fields on Minkowski
spacetime - where one can meaningfully be talking about spacetime
entities. In this context, we will be able to consider marking operations
locally defined in terms of local projections and commutation relations
implied by causality-locality axiom of the theory. Prop.\ref{prop:11}
is valid for local projections and it establishes the manifestation
of the mark in infinitely many regions that result from the timelike
translation of a given bounded spacetime region. However, as it is
shown in the Appendix, analytic projections do not belong in local
algebras in a Haag-Araki theory and they cannot be taken to determine
local operations that induce local marking interactions. Thus, in
order to obtain, a similar result about \textit{CSIP-causal} processes,
in terms of Prop.\ref{prop:13}, we should abandon local marking interactions,
and talk about almost local marking operations and approximate manifestations
of the mark. 

\section{Causal Processes and C{*}-Dynamical Systems}

Consider a C{*}-dynamical system\textit{ }$\left\langle \mathcal{A},\mathbb{R},\tau\right\rangle $,
defined in terms of a von Neumann-algebra $\mathcal{A\subseteq\mathcal{B}}(\mathcal{H})$,
where $\mathcal{H}$ is a complex separable Hilbert space, the group
$\left\langle \mathbb{R},+\right\rangle $ of real numbers with the
operation of addition, and a strongly continuous homomorphism of $\mathbb{R}$
into the group of automorhisms of $\mathfrak{\mathcal{A}}$, $\tau:\mathbb{R}\rightarrow Aut(\mathfrak{\mathcal{A}}):t\mapsto\tau_{t}$,
induced by a strongly continuous unitary representation $\mathrm{U}$
of $\left\langle \mathbb{R},+\right\rangle $ on $\mathcal{H}$,

\begin{equation}
\tau_{t}(X)=\mathrm{U}_{t}X\mathrm{U}_{t}^{-1}.\label{eq:1-1}
\end{equation}

\begin{defn}
\label{def:process} A process is a C{*}-dynamical system $\left\langle \mathcal{A},\mathbb{R},\tau\right\rangle $.

The classical or quantum nature of the process depends on whether
the algebra $\mathcal{A}$ is or is not commutative.

For Salmon, processes are particular local entities identified both
in terms of their geometric specifications in spacetime and the empirical
manifestation of their uniformity. Yet, in Def.\ref{def:process}
there is nothing suggestive of a process' particular nature. We suggest
a non-geometric account of a process which makes no reference to spacetime
and locality, so the particular nature cannot be couched in terms
of geometric specifications. In addition, although the uniformity
of a process is reflected on the action of the group of isometries
$t\mapsto\tau_{t}$ on the algebra that determines the dynamical evolution
of the physical observables, a process, according to Def.\ref{def:process},
is not associated with any set of measurements of any observable along
the process, as the specification of a process in terms of the empirical
manifestation of its uniformity would seem to require. To consider
the empirical manifestation of the uniformity along a process, we
need to specify a particular normal state $\omega$ of the algebra
$\mathcal{A}$ and a particular observable $Q\in\mathcal{A}$. Then,
the mapping $t\mapsto\omega(\tau_{t}(Q))$ describes the exhibited
uniformity of the process. Parameter $t$ indexes the stages of the
process - be they spacetime points along a curve, in Salmon's geometric
view - and $\omega(\tau_{t}(Q))$ is the corresponding expectation
values of the observable $\tau_{t}(Q)$, resulting from measurements
performed at each stage of the process. In what follows, we will consider
the observable $Q$ to be a projection and $\omega(\tau_{t}(Q))$
to denote the evolution of the probability of measuring $Q$ along
that process. The restriction to normal states of the algebra allows
for such a probabilistic interpretation of $\omega(\tau_{t}(Q))$
and it is also supported by physical reasons.\footnote{Gleason's theorem and its consequences for a probabilistic interpretation
of a quantum theory and the preparability of a state by means of physically
realizable local operations in an open bounded region of Minkowski
spacetime are some reasons for restricting physically admissible states
to normal states. See, \citep{key-14}}In accordance with above considerations we may define the concept
of an instance of a process: 
\end{defn}

\begin{defn}
\label{def:instance} An \textit{instance} of a process $\left\langle \mathcal{A},\mathbb{R},\tau\right\rangle $
is the quintuple $\left\langle \mathcal{A},\mathbb{R},\tau;\omega,Q\right\rangle $,
where, $\left\langle \mathcal{A},\mathbb{R},\tau\right\rangle $ is
a process, $\omega$ a normal state of the algebra $\mathcal{A}$
and $Q=Q^{*}\in\mathcal{A}$, an observable of the algebra.
\end{defn}

\begin{defn}
\label{def:mark} A\textit{ mark }imposed on\textit{ }the instance
$\left\langle \mathcal{A},\mathbb{R},\tau;\omega,Q\right\rangle $
of a process $\left\langle \mathcal{A},\mathbb{R},\tau\right\rangle $
is a change of the normal state $\omega$ of $\mathcal{A}$ determined
by a non-selective operation $T$ which is manifested as a change
in the expectation value of the observable $\tau_{t_{1}}(Q)$, for
some $t_{1}\in\mathbb{R}.$

Hence, a mark is determined by the state transformation
\end{defn}

\[
\omega\mapsto\omega_{T},
\]

where $T:\mathcal{A}\rightarrow\mathcal{A}$ is a linear map, positive,
completely positive and unital map, i.e.

\[
T(I)=I,
\]

\[
T(X^{*}X)\geq0,
\]
 and the map,
\[
id\otimes T:\mathbb{C^{\mathit{n\times n}}\otimes\mathcal{A\rightarrow\mathbb{C}^{\mathit{n\times n}}\otimes\mathcal{A}}}
\]

is positive for every $n\in\mathbb{N}.$ Thus, the state transformation
is defined as follows:,

\[
\omega_{T}(X)=\omega(T(X))
\]

Moreover, we demand that the state transformation induced by $T$
changes the expectation value of the observable $\tau_{t_{1}}(Q)$,
for some $t_{1}\in\mathbb{R}$, i.e.,

\[
\omega(\tau_{t_{1}}(Q))\neq\omega_{T}(\tau_{t_{1}}(Q)).
\]

Notice, first, that the outcome of marking a process is a new instance
$\left\langle \mathcal{A},\mathbb{R},\tau;\omega_{T},Q\right\rangle $
of the same process which provides a different empirical manifestation
of the uniformity along the process. Secondly, although the notion
of operation has been developed in the context of quantum theory,
Def.\ref{def:mark} does not exclude classical processes. In information
theory, such classical operations are known as classical channels
and they describe manipulation of classical information. They are
defined in terms of linear, positive and unital maps from a commutative
C{*}-algebra to another - since complete positivity is equivalent
to positivity for commutative algebras (see, \citep[p.199]{key-7-1})
- and they induce state transformations that are determined in terms
of classical transition probability distributions in the state space.\footnote{For a finite state space $\mathscr{X}=\left\{ x_{1}...x_{n}\right\} $,
the algebra of observables is $C(\mathscr{X})=\left\{ f:f:\mathscr{X}\rightarrow\mathbb{C}\right\} $
and let $T:C(\mathscr{X})\rightarrow C(\mathscr{X})$ be a classical
operation, such that $g=T(f):\left[\begin{array}{c}
g(x_{1})\\
\vdots\\
g(x_{n})
\end{array}\right]$ =$\left[\begin{array}{ccc}
T_{x_{1}x_{1}} & \ldots & T_{x_{1}x_{n}}\\
\vdots & \ddots & \vdots\\
T_{x_{n}x_{1}} & \cdots & T_{x_{n}x_{1}}
\end{array}\right]$$\left[\begin{array}{c}
f(x_{1})\\
\vdots\\
f(x_{n})
\end{array}\right]$ for every $f\in C(\mathscr{X})$.}

Although the definitions we provide cover both classical and quantum
processes, in what follows, we are going to focus on the exploration
of quantum processes. In particular, the normal state we referred
to in defining the instance of a process, we assume it to be described
in terms of a density operator $W$ on $\mathcal{H},$ i.e. $\omega(\text{\ensuremath{X)=tr(WX)}}$
for every $X\in\mathcal{A}$, while the marking interaction to be
given by the non-selective operation described by the Lüders rule
for a projection $P\in\mathcal{A}$.\footnote{For a brief and very informative presentation of the Lüders rule,
see \citep{key-13}} Hence, a mark is determined by the state transformation

\[
\omega\mapsto\omega_{P},
\]
where $\omega_{P}(\text{\ensuremath{X)=tr(W_{P}X)}}$for every $X\in\mathcal{A}$
and 
\begin{equation}
W_{P}=PWP+(1-P)W(1-P),\label{eq:1}
\end{equation}

and it changes the expectation value of the observable $\tau_{t_{1}}(Q)$,
for some $t_{1}\in\mathbb{R}.$

\[
\omega(\tau_{t_{1}}(Q))=tr(W\tau_{t_{1}}(Q))\neq tr(W_{P}\tau_{t_{1}}(Q))=\omega_{P}(\tau_{t_{1}}(Q)).
\]

As it will become clear soon (see lemma \ref{lem:8}), this way of
restricting the definition of the marking interaction delimits us
to consider only quantum processes, since in classical processes represented
by commutative von Neumann algebras the mark will never appear at
any observable.

Thirdly, our choice of a non-selective operation for the description
of the marking interaction can be justified by reference to to Clifton
and Halvorson's distinction between physical and conceptual operations
on an ensemble of physical systems \citep{key-7-2}. They claimed,
roughly, that given an ensemble of physical systems and a device that
operates on the systems of the ensemble, if the ensemble in the final
state includes all systems from the original ensemble without ignoring
anyone, then the transformation of the ensemble is non-selective.
If, on the other hand, in the final ensemble we keep only those systems
that responded in a certain way to the operation of the device, selecting,
thus, a desired outcome, then we do not consider only the physical
interaction that takes place, but we perform, additionally, a conceptual
operation to result in the final ensemble. Hence, our favoring non-selecting
operations for marking interactions reflects the choice of considering
marking in terms of a physical interaction.

Fourthly, to return to Salmon's views, to mark a process requires
a local interaction that takes place at a spacetime locale. This stipulation
is not made here unless $\mathcal{A}$ is considered the quasilocal
algebra of a local quantum system defined in terms of a net of von
Neumann-algebras indexed by spacetime regions of a background spacetime.
In this case, the state transformation described by the Lüders rule,
(\ref{eq:1}), in terms of a local projection, corresponds to a non-selective
measurement which takes place in a specified spacetime region; hence,
marking a process becomes the outcome of a local interaction. Nevertheless,
as mentioned earlier, these definitions do not intend to include geometric
considerations and to regard causal processes as spacetime entities;
locality issues will be postponed for the next section.

Next, we attempt to explicate the concept of mark transmission by
providing two alternative formulations which make different demands
regarding the continuity of the manifestation of the mark along a
process.
\begin{defn}
\label{def:Trans} For an instance $\left\langle \mathcal{A},\mathbb{R},\tau;\omega,Q\right\rangle $
of a process, a mark induced by a projection $P$ is \textit{continuously
transmitted }over an interval $\left[0,t_{1}\right]$, if and only
if it is manifested as a change in the expectation value of every
observable $Q_{t}=\tau_{t}(Q)$, for $t\in(0,t_{1}]$.

Hence,
\end{defn}

\[
\omega(Q_{t})=tr(WQ_{t})\neq tr(W_{P}Q_{t})=\omega_{P}(Q_{t}),
\]

for every $t\in(0,t_{1}]$.
\begin{defn}
\label{def:CSIP}For an instance $\left\langle \mathcal{A},\mathbb{R},\tau;\omega,Q\right\rangle $
of a process, a mark induced by a projection $P$ is \textit{continuously
transmitted} \textit{up to a countable set of isolated points }over
an interval $\left[0,t_{1}\right]$ if and only if it is manifested
as a change in the expectation value of every observable $Q_{t}=\tau_{t}(Q)$,
for $t\in(0,t_{1}]$, with the exception of a countable set of values
of the parameter $t$ which form a discrete metric space with the
usual metric of the real numbers.
\end{defn}

Hence,

\[
\omega(Q_{t})=tr(WQ_{t})\neq tr(W_{P}Q_{t})=\omega_{P}(Q_{t}),
\]

for every $t\in(0,t_{1}]$ with the exception of a countable set of
values of the parameter which form a discrete metric space with the
usual metric of the real numbers.

According to Def.\ref{def:CSIP}, continuous mark transmission up
to a countable set of isolated points allows denumerable gaps, either
finite or countably infinite, in the manifestation of a mark over
an interval of a process. It is demanded, however, the real values
of the parameter in which the mark disappears to form a discrete space
in the usual topology of the real numbers, i.e. to be isolated. Thus,
for an instance $\left\langle \mathcal{A},\mathbb{R},\tau;\omega,Q\right\rangle $
of a process and a mark induced by a projection $P$, if 
\[
S=\left\{ t:t\in\left[0,t_{1}\right]\quad and\quad\omega(Q_{t})=tr(WQ_{t})=tr(W_{P}Q_{t})=\omega_{P}(Q_{t}),\right\} ,
\]

is the subset of $\left[0,t_{1}\right]$ in which the mark disappears,
Def.\ref{def:CSIP} suggests that for every $t_{0}\in S$, there is
a neighborhood $\mathcal{N}(t_{0},r)=\left\{ t:\left|t-t_{0}\right|<r\right\} $
such that $\mathcal{N}(t_{0},r)\cap S=\left\{ t_{0}\right\} $. Put
more simply, if a mark is continuously transmitted up to a countable
set of isolated points\textit{ }over an interval $\left[0,t_{1}\right]$,
the time instants in which the mark disappears are not densely distributed
over $\left[0,t_{1}\right]$, i.e. they cannot be represented by the
rational numbers in $\left[0,t_{1}\right]$.

Nevertheless, for Salmon, mark transmission over a continuous segment
of a spacetime curve representing geometrically a causal process is
continuous, i.e. the mark appears at each spacetime point of that
segment. This conception is expressed in the following definition,
given that $t$ is taken to be a curve parameter that indexes spacetime
points along that segment:
\begin{defn}
\label{def: causal-process}A process $\left\langle \mathcal{A},\mathbb{R},\tau\right\rangle $
is \textit{causal} if and only if for an instance $\left\langle \mathcal{A},\mathbb{R},\tau;\omega,Q\right\rangle $
of the process there is a mark, induced by a projection $P$, which
can be transmitted continuously over some closed interval of the real
numbers.
\end{defn}

A weaker definition of a causal process would allow 'gappy' causal
processes without eliminating completely the continuity.
\begin{defn}
\label{def:CSIP-process} A process $\left\langle \mathcal{A},\mathbb{R},\tau\right\rangle $
is \textit{CSIP-causal }or \textit{causal-up-to-a-countable-set-of-isolated-points}
if and only if for an instance $\left\langle \mathcal{A},\mathbb{R},\tau;\omega,Q\right\rangle $
of the process there is a mark, induced by a projection $P$, which
can be continuously transmitted up to a countable set of isolated
points over some closed interval of the real numbers.
\end{defn}

At this point, let us say, once more, that although \textit{CSIP-causal}
processes save some part of the original intuition of continuity of
mark transmission, yet, they fail to provide an adequate account of
Salmon's original intuition of continuous propagation of causal influence,
and, without further ado, present some facts regarding the manifestation
of a mark in a process. 
\begin{lem}
\citep{key-7-3} \label{lem:8}For a von Neumann algebra $\mathcal{A}$
and a state transformation $\omega\mapsto\omega_{P},$ induced by
a non-selective Lüders measurement for a projection $P\in\mathcal{A}$,
the expectation value of an observable \textup{$Q\in\mathcal{A}$
}is invariant under this state transformation if and only if \textup{$\,$$\left[P,Q\right]=0.$}
\end{lem}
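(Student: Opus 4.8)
The plan is to pass from the state transformation to its dual action on observables and then reduce the problem to an elementary block computation with the projection $P$. The invariance condition, understood as $\omega(Q)=\omega_{P}(Q)$ for every normal state, reads $\mathrm{tr}(WQ)=\mathrm{tr}(W_{P}Q)$ with $W_{P}=PWP+(1-P)W(1-P)$. Applying the cyclicity of the trace to each summand (legitimate since $W$ is trace class and $P,Q$ are bounded), I would move the outer projections onto $Q$ to obtain
\[
\mathrm{tr}(W_{P}Q)=\mathrm{tr}\bigl(W\,(PQP+(1-P)Q(1-P))\bigr)=\mathrm{tr}(WQ_{P}),
\]
where $Q_{P}:=PQP+(1-P)Q(1-P)$ is the Heisenberg-picture (dual) Lüders transform of $Q$. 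Since the normal states separate the self-adjoint elements of $\mathcal{A}$ — it already suffices to test against the vector states $|\psi\rangle\langle\psi|$, which forces $\langle\psi|(Q-Q_{P})|\psi\rangle=0$ for all $\psi$ and hence $Q-Q_{P}=0$ — invariance in every normal state is equivalent to the operator identity $Q=Q_{P}$.

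The second step is to recognize $Q_{P}$ as the diagonal part of $Q$ relative to the orthogonal decomposition $\mathcal{H}=P\mathcal{H}\oplus(1-P)\mathcal{H}$. Expanding $Q=(P+(1-P))Q(P+(1-P))$ gives
\[
Q-Q_{P}=PQ(1-P)+(1-P)QP,
\]
so $Q_{P}$ simply discards the two off-diagonal blocks. Because these blocks take values in the mutually orthogonal subspaces $(1-P)\mathcal{H}$ and $P\mathcal{H}$, and because $(1-P)QP=(PQ(1-P))^{*}$ as $Q=Q^{*}$, the identity $Q=Q_{P}$ holds if and only if the single block $PQ(1-P)$ vanishes.

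Finally I would tie the vanishing of that block to the commutator. From $PQ(1-P)=PQ-PQP$ and $(1-P)QP=QP-PQP$ one reads off
\[
[P,Q]=PQ(1-P)-(1-P)QP,
\]
so $PQ(1-P)=0$ (together with its adjoint) yields $[P,Q]=0$; conversely, if $[P,Q]=0$ then $PQP=PQ=QP$, whence $PQ(1-P)=PQ-PQP=0$. This closes the equivalence. The only genuinely delicate point is the reduction in the first paragraph: one must justify trace cyclicity in the possibly infinite-dimensional setting and the separation argument that converts invariance in \emph{all} normal states into the operator identity $Q=Q_{P}$ — it is precisely the quantification over all states that makes the stated biconditional correct, since for a single fixed state the forward implication can fail. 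Once this reduction is secured, the remaining algebra is routine.
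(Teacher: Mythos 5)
Your proof is correct. Note first that the paper offers no proof of Lemma \ref{lem:8} at all: it is quoted from \citep{key-7-3}, so the only internal argument to compare with is the proof of the companion Lemma \ref{lem: 12}. Your route turns out to be essentially that same computation in different packaging: the trace-cyclicity step giving $\mathrm{tr}(W_{P}Q)=\mathrm{tr}(WQ_{P})$ with $Q_{P}=PQP+(1-P)Q(1-P)$ is equivalent to the identity established there, $\mathrm{tr}(W_{P}Q)=\mathrm{tr}(WQ)-\mathrm{tr}\left(W[P,[P,Q]]\right)$ (the paper's display drops the $W$ inside the trace, evidently a typo), because $Q-Q_{P}=PQ(1-P)+(1-P)QP=[P,[P,Q]]$. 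What your write-up supplies beyond this --- and what is precisely the bridge from the fixed-state criterion $\omega([P,[P,Q]])=0$ of Lemma \ref{lem: 12} to the operator statement $[P,Q]=0$ of Lemma \ref{lem:8} --- are your two closing steps: the separation argument (vector states suffice, by polarization on a complex Hilbert space) upgrading invariance in all normal states to the operator identity $Q=Q_{P}$, and the block argument showing that vanishing of the off-diagonal sum forces each block $PQ(1-P)$, $(1-P)QP$ to vanish separately, which is equivalent to $[P,Q]=0$. Your final caveat is also exactly right and worth stressing: the biconditional holds only under quantification over all (normal) states, since for a single fixed $\omega$ the forward implication fails --- e.g.\ whenever $[W,P]=0$ one has $W_{P}=W$, hence invariance for every $Q$, commuting or not. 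This is in fact a point the paper treats loosely, since Corollary \ref{Cor:9} invokes the lemma for the one fixed state of an instance, where only the direction ``$[P,\tau_{t_{1}}(Q)]=0$ implies invariance'' is literally valid.
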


\begin{cor}
\label{Cor:9} For an instance $\left\langle \mathcal{A},\mathbb{R},\tau;\omega,Q\right\rangle $
of a process $\left\langle \mathcal{A},\mathbb{R},\tau\right\rangle $,
a necessary and sufficient condition for a mark induced by a projection
\textup{$P$} to be manifested at stage \textup{$t_{1}\in\mathbb{R}$}
of the process, at \textup{$\tau_{t_{1}}(Q)$,} is the non-commutation
of $P$ and $\tau_{t_{1}}(Q).$ Equivalently
\[
\omega(\tau_{t_{1}}(Q))=tr(W\tau_{t_{1}}(Q))=tr(W_{P}\tau_{t_{1}}(Q))=\omega_{P}(\tau_{t_{1}}(Q))\Longleftrightarrow[P,\tau_{t_{1}}(Q)]=0.
\]
\end{cor}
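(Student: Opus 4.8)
The plan is to obtain Corollary~\ref{Cor:9} as an immediate specialization of Lemma~\ref{lem:8}, applying the latter not to $Q$ itself but to its time-translate $\tau_{t_{1}}(Q)$. The only thing that requires checking is that $\tau_{t_{1}}(Q)$ satisfies the hypotheses of the lemma, namely that it is again an observable lying in the same von Neumann algebra $\mathcal{A}$; once this is secured, the stated equivalence falls out verbatim.

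First I would record that each $\tau_{t_{1}}$ is a $*$-automorphism of $\mathcal{A}$, since by assumption $\tau$ maps $\mathbb{R}$ into $Aut(\mathcal{A})$. Consequently $\tau_{t_{1}}(Q)\in\mathcal{A}$, and because $Q=Q^{*}$ and a $*$-automorphism commutes with the involution, $\tau_{t_{1}}(Q)^{*}=\tau_{t_{1}}(Q^{*})=\tau_{t_{1}}(Q)$, so $\tau_{t_{1}}(Q)$ is self-adjoint. Thus $\tau_{t_{1}}(Q)$ is a genuine observable of $\mathcal{A}$ and an admissible input to Lemma~\ref{lem:8}.

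Next I would apply Lemma~\ref{lem:8} to the observable $\tau_{t_{1}}(Q)$ and the state transformation $\omega\mapsto\omega_{P}$ induced by the non-selective L\"uders measurement of $P$. The lemma then yields that the expectation value of $\tau_{t_{1}}(Q)$ is invariant under $\omega\mapsto\omega_{P}$ if and only if $[P,\tau_{t_{1}}(Q)]=0$. Spelling out the expectation values through the density operators $W$ and $W_{P}=PWP+(1-P)W(1-P)$ turns invariance into the chain of equalities $tr(W\tau_{t_{1}}(Q))=tr(W_{P}\tau_{t_{1}}(Q))$, which is exactly the ``Equivalently'' display of the corollary. Negating both sides gives that the mark is manifested at stage $t_{1}$, i.e. the expectation value genuinely changes, precisely when $P$ and $\tau_{t_{1}}(Q)$ fail to commute.

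Since the deduction is a pure substitution into an already established lemma, there is no real obstacle here: the entire content is the observation that evolution by a $*$-automorphism keeps the marked observable inside $\mathcal{A}$ and preserves self-adjointness, so that the commutation criterion of Lemma~\ref{lem:8} transfers unchanged to every stage $t_{1}$. The only point at which one should be slightly careful is the logical bookkeeping between ``mark manifested'' (inequality of expectation values) and the lemma's ``invariant'' (equality), which are exact negations of one another.
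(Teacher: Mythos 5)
Your proposal is correct and follows exactly the route the paper intends: Corollary~\ref{Cor:9} is obtained by applying Lemma~\ref{lem:8} to the observable $\tau_{t_{1}}(Q)$, which remains a self-adjoint element of $\mathcal{A}$ because $\tau_{t_{1}}$ is a $*$-automorphism. The paper treats this as immediate and offers no separate proof, so your careful spelling out of the substitution (and of the negation linking ``invariance'' to ``mark manifestation'') adds only bookkeeping, not a different method.
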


\begin{lem}
\textup{\citep{key-8} \label{lem: 10} }Assume we have a strongly
continuous one-parameter group $\mathrm{U}_{t}$ of unitary operators
on a Hilbert space whose generator H has a spectrum bounded from below.
Let $E,F$ be two projections such that

\[
\mathrm{U}_{t}E\mathrm{U}_{t}^{-1}F=F\mathrm{U}_{t}E\mathrm{U}_{t}^{-1}
\]
 for $\left|t\right|<\varepsilon$. If we have $E\cdot F=0$ then
$\mathrm{U}_{t}E\mathrm{U}_{t}^{-1}F=0$ for all $t$.
\end{lem}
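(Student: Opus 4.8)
The plan is to convert the two-sided conjugation $E_t=\mathrm{U}_tE\mathrm{U}_t^{-1}$, which has no half-plane analyticity, into a one-sided product that does, and then run a positive-energy analytic-continuation argument. First I would normalize the generator: replacing $H$ by $H-\inf\sigma(H)$ multiplies each $\mathrm{U}_t$ by a scalar phase, which cancels in $E_t=\mathrm{U}_tE\mathrm{U}_t^{-1}$, so without loss of generality $\sigma(H)\subseteq[0,\infty)$. Next I would record the elementary reduction
\[
E_tF=0 \iff E\,\mathrm{U}_{-t}F=0,
\]
which holds because $\mathrm{U}_t$ is unitary and $E_tF=\mathrm{U}_t\,(E\,\mathrm{U}_{-t}F)$. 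Thus it suffices to prove that the operator-valued function $G(s):=E\,\mathrm{U}_sF$ vanishes for every real $s$. The point of this rewriting is that, since $H\ge 0$, for all $\psi,\phi$ the scalar function $s\mapsto\langle\psi,G(s)\phi\rangle=\langle E\psi,\mathrm{U}_sF\phi\rangle=\int_0^\infty e^{is\lambda}\,d\mu_{\psi,\phi}(\lambda)$, with $\mu_{\psi,\phi}$ of total variation at most $\|E\psi\|\,\|F\phi\|$, is the boundary value of a function holomorphic and bounded in the upper half-plane $\{\operatorname{Im}z>0\}$.

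Now I would feed in the two hypotheses. The orthogonality $EF=0$ says exactly that $G(0)=0$. The commutation hypothesis, conjugated by $\mathrm{U}_{\mp t}$, is equivalent to $[\,E,\,\mathrm{U}_{-t}F\mathrm{U}_t\,]=0$ for $|t|<\varepsilon$, which in terms of $G$ becomes the reflection-type identity
\[
\mathrm{U}_{-t}\,G(t)=G(t)^{*}\,\mathrm{U}_t,\qquad |t|<\varepsilon,
\]
where $G(t)^{*}=F\,\mathrm{U}_{-t}E=\tilde G(-t)$ is a boundary value of the second bounded holomorphic function $\tilde G(z):=F\,\mathrm{U}_zE$. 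The idea is that this identity glues the upper-half-plane function $G$ to (a translate of) the conjugate of $\tilde G$ across the interval $(-\varepsilon,\varepsilon)$, so that an edge-of-the-wedge / Schwarz-reflection argument, exploiting positivity of the energy, forces the boundary function $\langle\psi,G(t)\phi\rangle$ to vanish throughout $(-\varepsilon,\varepsilon)$. Once this is in hand, I would invoke the uniqueness theorem for bounded holomorphic functions on a half-plane — a nonzero such function has nonzero boundary values almost everywhere — so that vanishing on a set of positive Lebesgue measure forces $\langle\psi,G(s)\phi\rangle\equiv0$ for all real $s$ and all $\psi,\phi$. Hence $G\equiv0$ on $\mathbb{R}$, and by the reduction above $E_tF=\mathrm{U}_tG(-t)=0$ for all $t$.

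The main obstacle is precisely the middle step: propagating from commutation on the small interval to genuine vanishing, using the spectrum condition. This is a Borchers-type analytic continuation, and the hypothesis that $H$ is bounded below is indispensable there, since it is what supplies the one-sided (upper-half-plane) holomorphic extension of $s\mapsto\mathrm{U}_s$ on which the argument rests; with a two-sided spectrum the conclusion fails. Two points demand care: commutation is assumed only on the finite interval $(-\varepsilon,\varepsilon)$ rather than on a neighborhood of a half-line, so the reflection must be set up locally and then combined with the $H^{\infty}$ uniqueness theorem to reach all $t$; and the symmetry identity mixes the holomorphic $G$ with the anti-holomorphic $G^{*}$, so the continuation should be phrased as a genuine edge-of-the-wedge statement rather than a naive Schwarz reflection.
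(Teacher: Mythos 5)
The paper itself contains no proof of this lemma: it is quoted from Borchers (1967) \citep{key-8}, so your attempt must be judged against what it would actually take to prove Borchers' theorem, not against an internal argument.

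Your preparatory work is correct and well chosen: the normalization $H\geq 0$ (phases cancel in $\mathrm{U}_tE\mathrm{U}_t^{-1}$), the reduction $\mathrm{U}_tE\mathrm{U}_t^{-1}F=\mathrm{U}_t\left(E\,\mathrm{U}_{-t}F\right)$ so that it suffices to show $G(s)=E\,\mathrm{U}_sF\equiv 0$, the observation that $G$ is the boundary value of a bounded holomorphic function on the upper half-plane, the identity $\mathrm{U}_{-t}\,G(t)=G(t)^{*}\,\mathrm{U}_t$ on $(-\varepsilon,\varepsilon)$, and the closing appeal to boundary uniqueness for bounded holomorphic functions are all sound. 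The genuine gap is exactly the step you flag as ``the main obstacle,'' and it is not a technicality that reflection or edge-of-the-wedge will absorb: neither side of your identity $\mathrm{U}_{-t}\,G(t)=G(t)^{*}\,\mathrm{U}_t$ is the boundary value of a function holomorphic in a half-plane. Indeed $G(t)$ extends upward and $G(t)^{*}=\tilde G(-t)$ extends downward, but the dressing factors $\mathrm{U}_{-t}$ and $\mathrm{U}_t$ are themselves boundary values of operator-valued functions analytic in the lower, respectively upper, half-plane, so each side of the identity is a product of boundary values coming from \emph{opposite} half-planes; the two-sidedness you set out to eliminate re-enters through the unitaries. The edge-of-the-wedge theorem requires two functions holomorphic in opposite tubes whose boundary values agree on an open set of the edge, which is not the form of your identity, and Schwarz reflection requires conjugate-symmetric boundary values of a single scalar function, which you also do not have: in matrix elements your identity reads $f_{\psi,\phi}(t)=\overline{f_{\mathrm{U}_t\phi,\,\mathrm{U}_{-t}\psi}(t)}$, mixing different vector pairs in a $t$-dependent way. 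Consequently the intermediate claim that $G$ vanishes on $(-\varepsilon,\varepsilon)$ is never derived, and with it the whole proof collapses.

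It is worth seeing what a correct middle step must contend with. One \emph{can} manufacture a genuine edge-of-the-wedge situation: for vectors $\psi,\phi$ of bounded spectral support, $t\mapsto\langle\psi,\mathrm{U}_tE\mathrm{U}_{-t}F\phi\rangle$ extends holomorphically to the lower half-plane, $t\mapsto\langle\psi,F\mathrm{U}_tE\mathrm{U}_{-t}\phi\rangle$ to the upper one, and your commutation hypothesis says precisely that they agree on $(-\varepsilon,\varepsilon)$; gluing yields a single function holomorphic on the plane slit along $\mathbb{R}\setminus(-\varepsilon,\varepsilon)$. But at that point all one knows is that this glued function vanishes at $t=0$, which forces nothing; the actual content of Borchers' theorem lies in the further argument (exploiting that $E_tF$ is a projection on the small interval, growth estimates of Phragm\'en--Lindel\"of type, and the positivity of $H$) that upgrades this to identical vanishing. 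In short, your sketch delegates the entire substance of the theorem to ``an edge-of-the-wedge argument,'' while the specific identity you propose to feed into that machinery is not in a form the machinery accepts.
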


In lemma \ref{lem: 10} a spectrum condition is imposed on the dynamics
which is commonly taken as expressing the requirement that the energy
is positive. In addition, spectrum condition in Haag-Araki theories
of relativistic quantum fields is interpreted as further requiring
that effects propagate at velocities less or equal to the velocity
of light in the vacuum. Thus, one may suggest that a causal process
is defined in terms of ``a dynamical system + an adequate spectrum
condition''; a suggestion expressing the idea that causality restricts
the dynamics. First of all, the interpretation of spectrum condition
as a causality condition is not straightforward and unobjectionable.
In a recent article, Earman and Valente conclude that ``... while
it is wrong to regard the spectrum condition as a straightforward
causality condition, it does lie at a node of a complex web of interconnected
causality concepts of AQFT.'' \citep{key-8-1} Secondly, apart from
issues of interpretation of the spectrum condition, this suggestion
regarding causal processes is not directly related to Salmon's early
process theory of causation, which is our source of inspiration in
this paper, and unless one has provided a connection, we will not
consider it here. The following proposition is proven by means of
lemma\ref{lem: 10}, hence, it requires the spectrum condition; however,
as we will see, it is not sufficient to establish a causal process
in Salmon's sense. In this proposition, we assume that the projection
defining the marking operation is identical with the observable which
exhibits the uniformity of the process in the specified instantiation.
\begin{prop}
\label{prop:11} For an instance $\left\langle \mathcal{A},\mathbb{R},\tau;\omega,P\right\rangle $
of a process $\left\langle \mathcal{A},\mathbb{R},\tau\right\rangle $,
consider a mark induced by the projection $P\in\mathcal{A}$. If the
mark is manifested at $P_{t_{1}}=\tau_{t_{1}}(P)$, for some $t_{1}\in\mathbb{R},$
then it will be manifested at $P_{t}=\tau_{t}(P)$, for infinitely
many $t\in\left(0,t_{1}\right]$.
\end{prop}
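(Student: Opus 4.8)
The plan is to translate the statement about mark manifestation into a statement about commutators via Corollary~\ref{Cor:9}, and then argue by contradiction using Lemma~\ref{lem: 10}. By Corollary~\ref{Cor:9}, the mark is manifested at $P_{t}=\tau_{t}(P)$ precisely when $[P,\tau_{t}(P)]\neq0$; so the hypothesis reads $[P,\tau_{t_{1}}(P)]\neq0$ and the conclusion to be proved is that $[P,\tau_{t}(P)]\neq0$ for infinitely many $t\in(0,t_{1}]$. I would suppose, toward a contradiction, that manifestation occurs at only finitely many $t\in(0,t_{1}]$. Since a finite set of positive reals has a positive minimum, there is then a $\delta>0$ with $[P,\tau_{t}(P)]=0$ for every $t\in(0,\delta)$.

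The next step is to upgrade this one-sided neighborhood to a symmetric one. Applying the automorphism $\tau_{-t}$ to the commutator shows $[P,\tau_{t}(P)]=0$ iff $[P,\tau_{-t}(P)]=0$, so in fact $\tau_{t}(P)$ commutes with $P$ for all $|t|<\delta$. Since commuting with $P$ is the same as commuting with $1-P$, I would then set $E=P$ and $F=1-P$, which are projections with $EF=P(1-P)=0$ and $\tau_{t}(E)F=F\tau_{t}(E)$ for $|t|<\delta$ --- exactly the hypotheses of Lemma~\ref{lem: 10} (the spectrum condition being in force). Lemma~\ref{lem: 10} then yields $\tau_{t}(P)(1-P)=0$ for all $t\in\mathbb{R}$.

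Finally I would convert this back into commutation: $\tau_{t}(P)(1-P)=0$ gives $\tau_{t}(P)=\tau_{t}(P)P$, and taking adjoints (both $P$ and $\tau_{t}(P)$ are self-adjoint) gives $\tau_{t}(P)=P\tau_{t}(P)$, whence $[P,\tau_{t}(P)]=0$ for every $t\in\mathbb{R}$. In particular $[P,\tau_{t_{1}}(P)]=0$, contradicting the hypothesis. Hence manifestation cannot occur at only finitely many points of $(0,t_{1}]$. Equivalently, the argument shows that the non-commuting parameters must accumulate at $0$: for every $\delta$ there is a manifestation point in $(0,\delta)$, and recursively extracting such points produces a strictly decreasing sequence in $(0,t_{1})$, giving the infinitely many required values.

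I expect the main obstacle to be matching the hypotheses of Lemma~\ref{lem: 10} cleanly. The idea that makes everything fit is choosing $F=1-P$ rather than $F=P$: this is what supplies the orthogonality $EF=0$ that Lemma~\ref{lem: 10} demands, while the commutation hypothesis is unaffected because commuting with $P$ and commuting with $1-P$ are equivalent. The only other point needing care is the passage from the half-neighborhood $(0,\delta)$ to the full interval $(-\delta,\delta)$, which rests on the elementary $\tau_{-t}$-symmetry of the commutator.
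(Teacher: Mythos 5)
Your proof is correct and takes essentially the same route as the paper's: both translate manifestation into non-commutation via Corollary~\ref{Cor:9}, extend commutation from a one-sided to a symmetric interval around $0$, and apply Lemma~\ref{lem: 10} to the orthogonal pair $E=P$, $F=I-P$ under the spectrum condition to force $[P,\tau_{t}(P)]=0$ for all $t$, contradicting manifestation at $t_{1}$. The only differences are organizational: the paper runs the reductio on ``the mark disappears on all of $(0,t_{1})$'' and then iterates the argument \emph{ad infinitum} to extract infinitely many manifestation points, whereas you run a single reductio against finiteness via the minimum of the finite set; you also make explicit the small step, glossed over in the paper, of converting Lemma~\ref{lem: 10}'s conclusion $\tau_{t}(P)(I-P)=0$ back into commutation by taking adjoints.
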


\begin{proof}
Assume\textit{ }for\textit{ reductio} that for every $t\in\left(0,t_{1}\right)$
the mark disappears. Then, by Cor.\ref{Cor:9}, we infer that $\left[P,P_{t}\right]=0$
for all $t\in\left(0,t_{1}\right)$ which can be extended to $t\in\left(-t_{1},t_{1}\right)$.

Next, by applying lemma \ref{lem: 10} for the pair of orthogonal
projections $P,P^{\perp}=I-P$, where $I$ is the unit of $\mathcal{A}$,
we conclude that $\left[P,P_{t}\right]=0$ for all $t\in\mathbb{R}$;
which is not true, by hypothesis.

Hence, there is a $t_{2}\in\left(0,t_{1}\right)$ in which the mark
appears, i.e. $t_{2}\notin S_{W,P}$. By applying the same argument
for the transmission of the mark over $\left(0,t_{2}\right]$, we
may conclude the existence of a $t_{3}\notin S_{W,P}$, and so on
\textit{ad infinitum.}
\end{proof}
From the proof of Prop.\ref{prop:11} it can be easily understood
that the causal nature of the process has not been established, since
to conclude the manifestation of a mark at the intermediate stages
of the process, we applied the same argument countably many times;
hence we established the manifestation of a mark along an interval
of a process countable infinitely many times but not continuously.
Moreover, the mark is not even transmitted continuously up to a countable
set of isolated points, since one cannot exclude on \textit{a priori}
grounds that the mark appears \textit{only} at time instants $t_{n}=\frac{1}{r}+\frac{1}{nr}$
for some $0\neq r\in\mathbb{\mathbb{R}}$ and for every $n\in\mathbb{N}.$
Thus, we need stronger assumptions, as those stated in Prop.\ref{prop:13},
to meet the conditions of Def.\ref{def:CSIP}, and prove at least
the existence of CSIP-causal processes. But first, let us show another
useful lemma.
\begin{lem}
\label{lem: 12}For an instance $\left\langle \mathcal{A},\mathbb{R},\tau;\omega,Q\right\rangle $
of a process $\left\langle \mathcal{A},\mathbb{R},\tau\right\rangle $,
a necessary and sufficient condition for a mark induced by a projection
\textup{$P$ }not to be manifested at stage \textup{$t_{1}\in\mathbb{R}$}
of the process, at \textup{$\tau_{t_{1}}(Q)$,} is the following:

\[
\omega(\tau_{t_{1}}(Q))=tr(W\tau_{t_{1}}(Q))=tr(W_{P}\tau_{t_{1}}(Q))=\omega_{P}(\tau_{t_{1}}(Q))\Longleftrightarrow\omega([P,[P,\tau_{t_{1}}(Q)])=0.
\]
\end{lem}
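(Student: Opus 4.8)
The plan is to reduce the asserted equivalence to a single operator identity, because the statement that the mark fails to be manifested at stage $t_{1}$ is, by Definition \ref{def:mark}, nothing but the numerical equality $tr(WX)=tr(W_{P}X)$, where I write $X=\tau_{t_{1}}(Q)$ for brevity. First I would expand $W_{P}=PWP+(1-P)W(1-P)$ and carry the projections over onto $X$ using cyclicity of the trace, which is legitimate since $W$ is a density operator (hence trace class) and $X$ is bounded:
\[
tr(W_{P}X)=tr(PWPX)+tr\bigl((1-P)W(1-P)X\bigr)=tr\bigl(W\,(PXP+(1-P)X(1-P))\bigr).
\]
Consequently the non-manifestation condition is equivalent to $tr\bigl(W\,(X-PXP-(1-P)X(1-P))\bigr)=0$, that is, to $\omega\bigl(X-PXP-(1-P)X(1-P)\bigr)=0$.

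The crux is then to recognise the operator in the argument of $\omega$ as the double commutator. Expanding $(1-P)X(1-P)=X-PX-XP+PXP$ and using idempotency $P^{2}=P$, I would obtain
\[
X-PXP-(1-P)X(1-P)=PX+XP-2PXP.
\]
On the other hand, computing the double commutator directly,
\[
[P,[P,X]]=P(PX-XP)-(PX-XP)P=PX-PXP-PXP+XP=PX+XP-2PXP,
\]
where again $P^{2}=P$ is invoked to collapse the squared projections. The two right-hand sides coincide, so substituting back gives $\omega\bigl(X-PXP-(1-P)X(1-P)\bigr)=\omega([P,[P,X]])$, and restoring $X=\tau_{t_{1}}(Q)$ yields precisely the stated equivalence.

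I do not expect any genuine obstacle here: the argument is a finite algebraic identity together with an elementary trace manipulation, and the only points deserving a word of care are the trace-class justification for cyclicity and the repeated appeal to $P^{2}=P$ to eliminate the higher powers of the projection. It is worth noting how Lemma \ref{lem: 12} complements Corollary \ref{Cor:9}: the present criterion expresses non-manifestation at a \emph{fixed} state through the vanishing of a single expectation value $\omega([P,[P,\tau_{t_{1}}(Q)]])$, a condition that does not force the operator $[P,\tau_{t_{1}}(Q)]$ itself to vanish. This state-dependent form of the criterion is exactly what will be exploited to locate the isolated gaps in mark transmission in Proposition \ref{prop:13}.
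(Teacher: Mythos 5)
Your proof is correct and follows essentially the same route as the paper's: both expand $W_{P}=PWP+(1-P)W(1-P)$, use cyclicity of the trace to move the projections onto $\tau_{t_{1}}(Q)$, and identify the resulting operator $PX+XP-2PXP$ (with $X=\tau_{t_{1}}(Q)$, using $P^{2}=P$) as the double commutator $[P,[P,X]]$. Your write-up is in fact slightly more explicit than the paper's, which only says the identity follows ``by properly grouping the factors.''
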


\begin{proof}
(By simple calculations). Firstly, for $W_{P}$ defined as in \ref{eq:1},

\begin{equation}
W_{P}\tau_{t_{1}}(Q)=W\tau_{t_{1}}(Q)+PWP\tau_{t_{1}}(Q)-PW\tau_{t_{1}}(Q)-WP\tau_{t_{1}}(Q)+PWP\tau_{t_{1}}(Q).\label{eq:2}
\end{equation}

Next, given the property of the trace, $tr(AB)=tr(BA)$, by properly
grouping the factors in \ref{eq:2}, one may show directly,

\[
tr(W_{P}\tau_{t_{1}}(Q))=tr(W\tau_{t_{1}}(Q))-tr\left\{ [P,[P,\tau_{t_{1}}(Q)]\right\} .
\]
Hence, $\omega(\tau_{t_{1}}(Q))=\omega_{P}(\tau_{t_{1}}(Q))\Longleftrightarrow\omega([P,[P,\tau_{t_{1}}(Q)])=0.$
\end{proof}
As in Prop.\ref{prop:11}, we assume that the projection defining
the marking operation is identical with the observable which exhibits
the uniformity of the proccess in the specified instantiation.
\begin{prop}
\label{prop:13}For an instance $\left\langle \mathcal{A},\mathbb{R},\tau;\omega,P\right\rangle $
of a process $\left\langle \mathcal{A},\mathbb{R},\tau\right\rangle $,
consider a mark induced by the projection $P\in\mathcal{A}$ that
is analytic for $\tau_{t}$.\footnote{See Appendix}The mark is continuously
transmitted up to a countable set of isolated points over an interval
$\left[0,t_{1}\right]$, if it is manifested at $P_{t_{1}}=\tau_{t_{1}}(P)$.
Hence, according to Def. \ref{def:CSIP-process}, $\left\langle \mathcal{A},\mathbb{R},\tau\right\rangle $
is a CSIP-causal process.
\end{prop}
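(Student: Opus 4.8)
The plan is to reduce the statement about where the mark disappears to the zero set of a single scalar function of $t$, and then to exploit analyticity to control that zero set. First I would define
\[
f(t) = \omega\bigl([P,[P,\tau_t(P)]]\bigr),
\]
and observe, by Lemma \ref{lem: 12} applied with $Q=P$, that the mark fails to be manifested at the stage $t$ precisely when $f(t)=0$. Thus the set $S=\{t\in[0,t_1]:f(t)=0\}$ is exactly the set of parameter values at which the mark disappears, and the whole proposition amounts to showing that $S$ is a discrete (hence at most countable) subset of $[0,t_1]$ that forms a discrete metric space in the usual metric of $\mathbb{R}$.

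Next I would upgrade $f$ from a merely continuous function to the restriction of a holomorphic one, which is where the hypothesis that $P$ is analytic for $\tau_t$ enters. By the definition of an analytic element, the map $t\mapsto\tau_t(P)$ extends to an $\mathcal{A}$-valued function $z\mapsto\tau_z(P)$ holomorphic on a strip $\{z:|\mathrm{Im}\,z|<\lambda\}$ containing the real axis. Since $X\mapsto[P,[P,X]]$ is a bounded linear map and $\omega$ is a normal, hence bounded, linear functional, the composite
\[
f(z) = \omega\bigl([P,[P,\tau_z(P)]]\bigr)
\]
is a scalar holomorphic function on the same strip, whose restriction to $\mathbb{R}$ is the $f$ above. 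The care needed here is exactly the justification that a bounded functional and a bounded linear map carry a Banach-space-valued holomorphic function to a scalar holomorphic one; this is routine but is the technical heart of the argument.

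Then I would invoke the identity theorem. Because the mark is manifested at $t_1$, Cor.\ref{Cor:9} (equivalently Lemma \ref{lem: 12}) gives $f(t_1)\neq 0$, so $f$ is not identically zero on the connected strip. Consequently its zeros are isolated and cannot accumulate anywhere in the interior of the strip; since $[0,t_1]$ is a compact subset of that interior, an accumulation point would violate the identity theorem, so $S$ is finite. In particular $S$ is countable and, being finite, is automatically a discrete metric space, each of its points being isolated. This is precisely the condition of Def.\ref{def:CSIP}, so the mark is continuously transmitted up to a countable set of isolated points over $[0,t_1]$, and by Def.\ref{def:CSIP-process} the process is CSIP-causal.

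The main obstacle I anticipate is not the identity-theorem step but the preceding one: pinning down the sense in which $P$ being analytic for $\tau_t$ makes $z\mapsto\tau_z(P)$ a genuinely holomorphic $\mathcal{A}$-valued map on a neighborhood of all of $[0,t_1]$, and checking that composing with $[P,[P,\cdot]]$ and with the normal state $\omega$ preserves holomorphy. One should also be a little careful that the strip covers a neighborhood of the entire interval (not merely of $0$), so that the accumulation-point argument applies throughout $[0,t_1]$; with the standard notion of an analytic element for a one-parameter group the strip already contains a neighborhood of the whole real line, so no patching of subintervals is required.
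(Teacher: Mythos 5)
Your proposal is correct and takes essentially the same route as the paper: both reduce mark disappearance to the zero set of $F(z)=\omega\left(\left[P,\left[P,\tau_{z}(P)\right]\right]\right)$ via Lemma \ref{lem: 12}, obtain holomorphy of $F$ on the strip from the analyticity of $P$ (the paper via the $\tau_{t}$-invariant $*$-subalgebra of analytic elements, you via composition with the bounded linear map $X\mapsto\left[P,\left[P,X\right]\right]$ and the bounded functional $\omega$), and conclude with the identity theorem using $F(t_{1})\neq0$. Your further observation that the zero set is in fact finite on the compact interval $\left[0,t_{1}\right]$ is a correct, slightly sharper conclusion than the paper's ``at most countable and discrete.''
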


\begin{proof}
The set of analytic elements of $\tau_{t}$ is $\tau_{t}$-invariant
and a {*}-subalgebra of $\mathcal{A}$. Now, since $P$ is analytic
for $\tau_{t}$, the element

\[
P_{-t}\left(P_{-t}P-PP_{-t}\right)-\left(P_{-t}P-PP_{-t}\right)P_{-t},
\]

where $P_{t}=\tau_{t}(P)$, with $P_{0}=P$ is also $\tau_{t}$-analytic.

According to the definition of $\tau_{t}$-analytic elements, there
is a strip $I_{\lambda}$ in the complex plane, $I_{\lambda}=\left\{ z\in\mathbb{C}:\left|Imz\right|<\lambda\right\} $,
and a function $f:I_{\lambda}\rightarrow\mathcal{A}$ such that for

\[
z=t\in\mathbb{R},\quad f(t)=\tau_{t}\left(P_{-t}\left(P_{-t}P-PP_{-t}\right)-\left(P_{-t}P-PP_{-t}\right)P_{-t}\right),
\]

or, equivalently,

\[
z=t\in\mathbb{R},\quad f(t)=\left[P,\left[P,P_{t}\right]\right],
\]

and an analytic function $F:I_{\lambda}\rightarrow\mathcal{\mathbb{C}}$,

\[
F(z)=\omega(f(z)),
\]

for every state $\omega$ of $\mathcal{A}$.

By hypothesis, mark $<\mathcal{A},\tau_{t},\omega,P>$ imposed on
$<\mathcal{A},\tau_{t},\omega>$ for $t=0$, appears at $P_{t_{1}}=\tau_{t_{1}}(P)$.
This fact entails (by lemma \ref{lem: 12}) that $F$ is not a constant
function, i.e.,

\[
F(0)=\omega(f(0))=0\neq\omega\left(f(\tau_{t_{1}}(P))\right)=F(t_{1}).
\]

Now, since $I_{\lambda}$ is an open connected set, by a well-known
theorem of complex analysis, \citep{key-9}(Thm. 1.2 p.90), the set
of zeros of $F$ is at most countable and discrete, i.e. for every
$z_{0}\in I_{\lambda}$, $F(z_{0})=0$, there is a disc $D(z_{0},r)$
of some radius $r>0$ such that for no $z\in D(z_{0},r)$, $z\neq z_{0}$,
$F(z)=0$.
\end{proof}

\section{Causal Processes and Local Quantum Physics}

To discuss the locality aspect of a causal process as defined in the
previous section, one needs to be able to refer meaningfully to local
observables and to local states as well as to formulate locality conditions.
A friendly environment for this type of investigation, which combines
the C{*}-algebraic setting with essential reference to spacetime concepts
is local quantum physics \citep{key-10}. In particular, we will consider
a relativistic quantum field theory on Minkowski spacetime whose models
are structures of the following type:

\[
<\mathcal{H},\,O\mapsto\mathcal{R}(O),\,(a,\mathit{\varLambda})\mapsto U({\displaystyle a,\varLambda})>
\]

where $\mathcal{H}$ is a complex Hilbert space, $O\mapsto\mathcal{R}(O)$
is a net associating a von Neumann algebra $\mathcal{R}(O)$ with
every open bounded region $O$ of Minkowski spacetime, and $(a,\mathit{\varLambda})\mapsto U({\displaystyle a,}\varLambda)$
is a strongly continuous unitary representation of the (proper orthochronous)
Poincaré group in $\mathcal{H}$. The vacuum state is supposed to
be the unique Poincaré invariant state and is represented by a normalized
vector $\varOmega\in\mathcal{H}$ The models are supposed to satisfy
the usual Haag-Araki axioms, \citep{key-11}, so that, in particular,
the global von Neumann algebra $\mathcal{R}$ is well defined and
the vacuum vector $\varOmega$ is cyclic for every local algebra and
separating for every local algebra over a region with nonempty causal
complement\textemdash courtesy of the celebrated Reeh-Schlieder theorem.

Without loss of generality, we restrict our attention to the time
evolution of a system moving along a timelike direction $a$. The
observables measured in any open bounded region $O$ of Minkowski
spacetime are represented by self-adjoint elements of $\mathcal{R}(O)$.
They are evolving with time according to the law:

\[
\mathrm{U_{\mathit{t}}}\mathcal{R}(O)\mathrm{U}_{t}^{-1}=\mathcal{R}(O+ta),
\]

where, $O+ta=\left\{ x\in\mathbb{R^{\mathrm{4}}:\mathit{\mathit{x}\mathit{=y}+a,}\mathit{y}\in\mathit{O}}\right\} $,
$t\in\mathbb{R}$. Respectively, any element $A$ of the global von
Neumann algebra $\mathcal{R}$ is evolving with time to the element
$\mathrm{U_{\mathit{t}}}A\mathrm{U}_{t}^{-1}\in\mathcal{R}$. Hence,
the process we are interested in is the C{*}-dynamical system $\left\langle \mathcal{R},\mathbb{R},\tau\right\rangle $,
where the group of automorphims $\tau_{t}$, defined by \ref{eq:1-1},
describes translations along a given timelike direction $a$. Furthermore,
to define an instance of that process, $\left\langle \mathcal{R},\mathbb{R},\tau;\omega,Q\right\rangle $
we need to single out a normal state $\omega$ and a projection $Q$
of the global algebra $\mathcal{R}$.

In a a relativistic quantum field theory we can say more about the
local features of such a process. Firstly, the geometric representation
of such a process is a timelike line defined by $a$. Further, if
we associate an open bounded spacetime region $O$ with the process,
we may consider the process to be represented geometrically by a tube
defined as the translation of $O$ along the vector $a$. Secondly,
for every $t\in\mathbb{R}$, region $O+ta$ represents geometrically
a stage of the process of which the spacelike complement is well -defined:
\[
\left(O+ta\right)'=int\left\{ x\in\mathbb{R^{\mathrm{4}}:\mathit{\left(\mathit{x}\mathit{-y}\right)^{2}<\mathrm{0},}\mathit{y}\in\mathit{O}+\mathit{ta}}\right\} .
\]

Hence, according to the axiom of causality-locality (\citealp{key-11}:14),
\[
\forall t\in\mathbb{R},\forall S\subseteq R^{\mathrm{4}},\mathrm{open\,and\,bounded,}S\subset\left(O+ta\right)'\Longrightarrow[\mathcal{R}(S),\mathcal{R}(O+ta)]=0
\]

where, two algebras commute if and only if every element of the first
commutes with every element of the second algebra and \textit{vice
versa.}

A common interpretation of the causality- locality axiom is that any
two observables belonging in local algebras of spacelike separated
regions can be measured together, i.e., the measurement of the first
observable does not ``influence'' the measurement of the other -
a fact expressed by the commutativity of the algebras.

In terms of the local algebras $\mathcal{R}(O)$, one may understand
the marking of a process as the result of a local intervention which
takes place in $O$ by demanding the projection $P$ in Eq.\ref{eq:1}
to be a local projection, $P\in\mathcal{\mathcal{R}}(O)$. Moreover,
from the axiom of causality-locality and Lem.\ref{lem:8} it can be
inferred that a mark imposed on an instance $\left\langle \mathcal{R},\mathbb{R},\tau;\omega,Q\right\rangle $
of a process $\left\langle \mathcal{R},\mathbb{R},\tau\right\rangle $
by a local intervention in $O$ is \textit{not} manifested at any
local observable $Q'\in\mathcal{R}(S)$ of any open bounded region
$S\subset O'$. Converely, for any given time interval $I$ of a process,
any change produced in the spacelike complement of the tube representing
the process for the given interval, is not manifested a change in
the characteristic observable $\tau_{t}(Q)$ in the given interval,
$t\in I$. The non-manifestation of a mark at spacelike distance from
the region in which it is produced as well as the shielding of any
part of a process from infuences produced in spacelike distance from
it, builts in this account the demand of the theory of relativity
that no causal influence in propagated at velocities greater than
the velocity of light in the vacuum.

To obtain a causal process one needs to establish the continuity in
the manifestation of the mark along the process. By application of
Prop.\ref{prop:11} we can infer that a mark produced by a local non-selective
Lüders measurement of a projection $P\in\mathcal{\mathcal{R}}(O)$,
in a spacetime region $O$, in an instance $\left\langle \mathcal{R},\mathbb{R},\tau;\omega,P\right\rangle $
of a process $\left\langle \mathcal{R},\mathbb{R},\tau\right\rangle $,
is manifested at infinite time instants in the interval $(0,t_{1})$,
in spacetime regions $O+ta$, at observables $P_{t}=\tau_{t}(P)\in\mathcal{R}(O+ta)$
, given that it is manifested at $P_{t_{1}}=\tau_{t_{1}}(P)\in\mathcal{R}(O+t_{1}a)$
for some $t_{1}\in\mathbb{R}$. Nevertheless, as we have already explained
above, this result is insufficient to justify any intuition of continuity,
thus, to establish the causal character of a process. 

To make, however, stronger demands, as those required by Prop.\ref{prop:13},
we need to abandon some essential locality conditions, since there
are no analytic elements for time translations which belong in a local
algebra associated with any open bounded spacetime region in a Haag-Araki
theory of relativistic quantum fields that satisfy the usual axioms
(see, sec:A.-Appendix). Hence, one is not allowed to assume that
a mark imposed on a process is the result of a local intervention
associated with a state transformation of the form of Eq.\ref{eq:1},
defined in terms of an \textit{analytic} local projection $P.$ Morerover,
given that one cannot localize the marking intervention in any bounded
spacetime region, they cannot talk meaningfully about the manifestation
or not of the mark in spacelike distance from its locus of production,
since the locality-causality condition is valid for local observables.

On the other hand, since analytic elements for time translations form
a norm-dense subset of $\mathcal{R}$, one may take the production
of a mark in an instance $\left\langle \mathcal{R},\mathbb{R},\tau;\omega,P\right\rangle $
of a process $\left\langle \mathcal{R},\mathbb{R},\tau\right\rangle $
to be described in terms of an \textit{analytic} projection $P$ which
differs, in norm, from a local projection $P'\in\mathcal{\mathcal{R}}(O)$
less than a quantity $\delta>0$, 
\[
\left\Vert P-P'\right\Vert <\delta\Longrightarrow\left|\omega(P)-\omega(P')\right|<\delta,
\]

for every state $\omega$; where $\delta$ denotes the experimental
error. Hence, although the \textit{actual }or the \textit{real }physical
operation of marking a process takes place in a spacetime region and
calls for a representation in terms of a local projection, one may
\textit{decide} to represent it \textit{approximately} in terms of
an analytic for time translations element which is practically indistinguishable
from the local projection for every physically admissible (normal)
state of the system. In this way, one may talk of an almost local
mark imposed on $\left\langle \mathcal{R},\mathbb{R},\tau\right\rangle $
in some open bounded region $O$.

Next, consider the manifestation of the mark along the instance $\left\langle \mathcal{R},\mathbb{R},\tau;\omega,P\right\rangle $
of a process $\left\langle \mathcal{R},\mathbb{R},\tau\right\rangle $.
Since $\tau_{t},t\in\mathbb{R}$ is a group of isometries,
\[
\left\Vert P-P'\right\Vert <\delta\Rightarrow\left\Vert \tau_{t}(P)-\tau_{t}(P')\right\Vert <\delta,
\]

and the mark manifestation is required to be detected in $\tau_{t}(P)$,
the time translation of the analytic element $P$, instead of the
time translation of the local projection $P'$ that would represent
the mark, had it been local. Hence, we talk about an almost local
mark associated with a region $O$ which may be approximately manifested
in the translation of $O$, $O+ta$, along the timelike vector $a$.

At this point one may object that, on the one hand, we employed analytic
elements in the representation of an almost local marking operation
claiming that they are practically indistinguishable from local projections
belonging in open bounded spacetime regions - i.e. the difference
in their expectation values, in any state, is undetectable given the
limits experimental error - while on the other, we seem to be satisfied
with Def.\ref{def:mark},\ref{def:Trans} which defines mark manifestation
in terms of a difference in the expectation value of an observable,
without taking into account whether this difference is detectable
or not. To be consistent with the introduction of experimental error
as a limit to the detectability of the difference in the expectation
values of observables in a given state, we should demand

\begin{equation}
\left|\omega(\tau_{t}(P))-\omega_{P}(\tau_{t}(P))\right|=\left|tr(W\tau_{t}(P))-tr(W_{P}\tau_{t}(P))\right|\geq\delta,\label{eq:3.1}
\end{equation}

where $W\in\mathcal{R}$ is a density operator and $W_{P}\in\mathcal{R}$
is defined as in \ref{eq:1}, to claim that the mark is manifested
at $\tau_{t}(P)$ for some $t\in\mathbb{R}$ . The objection is correct
and it requires further investigation to obtain results that would
comply with condition \ref{eq:3.1}. In this paper, however, we follow
Def.\ref{def:mark},\ref{def:Trans} for characterizing a causal process.

Now, by applying Prop.\ref{prop:13} one concludes that a process
$\left\langle \mathcal{R},\mathbb{R},\tau\right\rangle $ represented
geometrically by a tube defined as the translation of $O$ along a
timelike vector $a$ qualifies as a \textit{CSIP-causal process} according
to Def.\ref{def:CSIP-process} since it is capable to transmit continuously
up to a countable set of isolated points a mark defined for an analytic
for time translations projection $P$, imposed on the process almost
locally, in the sense explained, in a bounded spacetime $O$, over
an interval $\left[0,t_{1}\right]$, according to Def. \ref{def:CSIP}.

Valente \citep{key-11-1} has raised serious objections against the
use of analytic elements for time translations to describe operations
in local quantum systems.\footnote{The discussion in this Valente's paper refers to almost local operators
which are constructed as analytic elements for the generators of the
group of translations in Minkowski spacetime.} One objection says that since analytic elements cannot be local in
any standard Haag-Araki theory, they represent unistantiated events
in spacetime and they cannot be taken, in any meaningful way, to be
in relation to any event or to have any local effect in spacetime.The
second objection relates to the way analytic elements can be constructed
by smearing local observables of a bounded region $O$ (see, sec:A.-Appendix),
providing, thus, some type of approximate localization in $O$ (by
disregarding the 'tails' of the smearing function). However, essentially,
analytic elements are global observables and to talk about almost
\textit{local} operations is rather a euphemism. The third objection
says that since analytic observables are not local they are not real
observables and no actual measurement can be performed of these fictional
observables. Hence, no real effect can be produced by such fictional
operations.

Both the first and the third objection fit nicely for the critical
discussion of \citep{key-11-2}, as intended. There, the operation
is defined in terms of an analytic elements and the manifestation
is determined in the change of the expectation value of a local observable;
hence, one may reasonably object that uninstantiated events in spacetime
cannot be in relation with proper events and that fictional operations
cannot produce real effects. In the present discussion, however, for
better or worse, the relation is established between homologous events,
both uninstantiated, which lie at a fictional level. Here, we tentatively
explore what would happen if local operations and local observables
were represented by non-local ones which, nevertheless, bear a special
relation to their local counterparts that allows us to talk about
almost local marking operations and their corresponding manifestations. 

To defend somewhat further the tentative hypothesis of using analytic
elements for the description of operations, let me refer to an old
paper of Steinmann \citep{key-11-3} in which (selective) operations
representing particle detection experiments are defined in terms of
analytic elements. There, Steinmann proved a result that captures
a basic intuition of particle causal processes. Namely, he has proven
that in an one-particle state of a field theory, ``...the probability
that two counters are triggered by the particle is appreciable only
if their separation is approximately parallel to the momentum of the
particle.'' In addition, if a third counter is placed, then ``...all
three counters can be triggered only if they lie roughly in a straight
line.'' Still, one may object to the description of a particle's
motion in terms of locality considerations. Yet, we, surely, would
like to save the intuitive plausibility of such particle motion as
good candidates even of an 'approximate' process. 

As a final comment, I would like to point out the problematic implementation
of relativistic causality when discussing causal processes in terms
of analytic under time translations elements in local quantum physics.
As explained before, the axiom of causality-locality guarantees that
when considering causal processes in terms of elements of local algebras,
the mark imposed on a process is not manifested at spacelike distance
from the locus of its production. One cannot make the same stipulation
when dealing with analytic for time translation projections that are
associated with some spacetime region $O$ in the aforementioned sense,
since the relevant axiom is not applicable any more. Still we may
claim that the mark defined in terms of analytic projection $P$ does
not become manifest at observables which commute with $P$; however,
we cannot any more claim that every observable associated with an
open bounded region in the spacelike complement of $O$, be that a
local one or a practically indistinguishable (within $\delta$) from
a local observable, commutes with $P$, making, thus, the relevant
mark disappear.

\section{Summary}

Process theories of causation employ three types of locality conditions
in the explanation of the cause and effect relation: (a) the production
of causal influence and its manifestation along a process occurs at
spacetime points; (b) the propagation of causal influence is continuous
in spacetime; (c) for any given spacetime point there is a set of
spacetime points which are not connectable by means of causal processes,
hence, these spacetime points are shielded from causal influence produced
at the given point.

In this paper, we attempted to define causal processes in the absence
of any reference to spacetime. We referred to C{*}-dynamical systems
their states and state transformations described in terms of projections
in a von Neumann algebra realizing marking processes (Def. \ref{def:process},\ref{def:instance},\ref{def:mark},\ref{def:Trans},\ref{def: causal-process},\ref{def:CSIP-process}).
Well-known conditions for the shielding of certain observables characterizing
quantum processes from the influence of marking operations, couched
in terms of the invariance of the expectation value of these observables
under the marking state transformations, were formulated (Lem.\ref{lem:8}).
The originality of our contribution rests, mainly, on the attempt
to establish some satisfactory conception of continuity in the propagation
of the mark along such a process. Proper continuity in mark transmission,
required for a causal process could not be established. Nonetheless,
we have shown that a mark is manifested infinitely many times in an
interval of a process (Prop.\ref{prop:11}), and that, on certain
conditions, the mark is manifested coninuously up to a countable set
of isolated point (Prop.\ref{prop:13}), establishing, thus, the existence
of CSIP-causal processes. Thus, in this abstract algebraic setting
we expressed conditions for process theories which if successfully
associated with spacetime properties and relations they would yield
a local account of causal processes.

To obtain a local account of causal processes, we examined a Haag-Araki
theory of relativistic quantum fields on Minkowski spacetime. In this
context, local observables belong in von Neumann algebras associated
with open bounded spacetime regions and a local marking operation
can be defined in terms of a local projection. In addition, we explained
how one can perceive an almost local marking operation, as a representative
of a local marking operation, and an approximate manifestation of
this mark associated with an analytic projection under time translations.
Analytic elements do not belong in the algebra of any open bounded
region but they can be practically indistinguishable from local elements.
In this way, the first of the three aforementioned locality conditions
for causal processes was treated. The axiom of causality-locality
satisfied by Haag-Araki theories provides a necessary and sufficient
condition for the non-manifestation of a local mark at spacetime distance
from the region in which it is generated. A similar condition for
marking operations in terms of analytic elements could not been established
and I am not sure whether it can be assumed independently. On the
other hand, continuity has been established, in the rather weak form
of CSIP mark transmission by Prop.\ref{prop:11} in terms of almost
local marking operations, while if our attention were to be restricted
to local marking operations, the same goal would not be attained.
Although we could not establish the existence of causal processes
in Salmon's sense, yet, we did try to explore, formally, different
approaches which emphasize on aspects of the philosopher's conception
within some limitations.

\medskip{}

\section*{A. Appendix.\label{sec:A.-Appendix} On analytic Elements for the
one-Parameter Group of time Translations in Local Quantum physics}

Given a $\sigma(X;F)$-continuous group of isometries of $X$, an
element $A\in X$ is said to be \textit{analytic for $\tau$} if there
exist a strip 
\[
I_{\lambda}=\left\{ z\in\mathbb{C}:\mid Imz\mid<\lambda\right\} 
\]
 in $\mathbb{C}$, and a function $f=f_{A}:I_{\lambda}\rightarrow X$
such that

(i) $f(t)=\tau_{t}(A)$ for $t\in\mathbb{R},$

(ii) $z\mapsto\eta(f(z))$ is analytic for all $\eta\in F$ and $z\in I_{\lambda}$.

In these conditions we write $f(z)=\tau_{z}(A),\:z\in I_{\lambda}.$
Moreover, if $\lambda=\infty$ we say that $a$ is entirely analytic.
(\citep{key-12}:Def.2.5.20)

The groups of isometries we consider in this paper are $\sigma(\mathcal{A},\mathcal{A}^{*})$-continuous
i.e. for all $A\in\mathcal{A},\,t\mapsto\eta(\tau_{t}(A))$ is continuous
for every $\eta\in\mathscr{\mathcal{A^{*}}},$which amounts to the
requirement that $t\mapsto\tau_{t}(A)$ be continuous in norm for
every $A\in\mathcal{A}$.

To obtain a family of analytic elements for the group of time translations
one may begin with any element localized in some open bounded region
$O$ of Minkowski spacetime, $A\in\mathcal{R}(O)$, and smear its
time translation over $\mathbb{R}$ with a Gaussian function depending
on a parameter $n\in\mathbb{N}.$ Namely,
\begin{center}
$A_{n}=\sqrt{\frac{n}{\pi}}\int_{-\infty}^{+\infty}e^{-nt^{2}}\tau_{t}(A)dt,\quad n\in\mathbb{N}.$
\par\end{center}

Each $A_{n}$ is an entire analytic element for $\tau_{t}$ and the
family $\left\{ A_{n}\right\} _{n\in\mathbb{N}}$ converges to $A$
in the $\sigma(X;F)$ topology. Moreover, in (\citealp{key-12}:Cor.2.5.23)
it is shown that the set of entire analytic elements form $\sigma(\mathcal{R};\mathcal{R^{*}})$-
continuous group of isometries form a norm-dense subset of $\mathcal{R}.$
\begin{fact*}
\textup{(Fredenhagen, 2019)}\footnote{Private communication} In
a Haag-Araki theory of relativistic quantum fields on Minkowski spacetime,
no non-trivial analytic elements for time translations can be localized
in open bounded regions of Minkowski spacetime.
\end{fact*}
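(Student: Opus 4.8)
The plan is to show that the ``non-trivial'' possibility fails, i.e. that any $A\in\mathcal R(O)$ which is analytic for the time translations $\tau_{t}$ must be a scalar multiple of the identity; the statement then follows by contraposition. The essential inputs are microcausality, the positivity of the energy (the spectrum condition: $H\geq 0$ with $H\varOmega=0$ and $\varOmega$ the unique invariant vector), and the Reeh--Schlieder theorem.

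First I would fix an open bounded region $S$ spacelike to $O$ and an element $B\in\mathcal R(S)$. Since $O$ is bounded, $O+ta$ remains spacelike to $S$ for $\left|t\right|<\varepsilon$, so microcausality gives $[\tau_{t}(A),B]=0$ for $\left|t\right|<\varepsilon$. Because $A$ is analytic, $z\mapsto\eta(\tau_{z}(A))$ is holomorphic on a strip $I_{\lambda}$ for every normal functional $\eta$; applying this to $\eta=\langle\phi,(\,\cdot\,)\psi\rangle$ shows that $z\mapsto\langle\phi,[\tau_{z}(A),B]\psi\rangle$ is holomorphic on $I_{\lambda}$ and vanishes on $(-\varepsilon,\varepsilon)$. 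By the identity theorem (the same complex-analytic step used in the proof of Prop.\ref{prop:13}, \citep{key-9}) it vanishes identically, so $[\tau_{t}(A),B]=0$ for \emph{all} $t\in\mathbb R$ and all $B\in\mathcal R(S)$. Thus $A$ commutes with the timelike-tube algebra $\mathcal N=\bigvee_{t}\mathcal R(S-ta)$, and since $\varOmega$ is cyclic for $\mathcal R(S)\subseteq\mathcal N$ by Reeh--Schlieder, it is separating for $\mathcal N'\ni A$.

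Second---and this is where the spectrum condition is indispensable---I would show $A\varOmega\in\mathbb C\varOmega$. Consider $\phi(t):=\langle\varOmega,B^{*}\tau_{t}(A)\varOmega\rangle$. Writing $\tau_{t}(A)\varOmega=U_{t}A\varOmega=e^{itH}A\varOmega$ gives $\phi(t)=\langle B\varOmega,e^{itH}A\varOmega\rangle$, which, since $H\geq 0$, extends to a bounded holomorphic function on the upper half-plane. On the other hand, using the commutation $[\tau_{t}(A),B^{*}]=0$ just established, $\phi(t)=\langle A^{*}\varOmega,e^{-itH}B^{*}\varOmega\rangle$, which extends to a bounded holomorphic function on the lower half-plane. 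The two continuations are continuous up to $\mathbb R$ with common boundary value $\phi$, so by the edge-of-the-wedge theorem they glue to an entire bounded function, hence constant by Liouville. Therefore $\langle B\varOmega,(e^{itH}-I)A\varOmega\rangle=0$ for all $t$ and all $B\in\mathcal R(S)$; as the vectors $B\varOmega$ are dense (Reeh--Schlieder), $e^{itH}A\varOmega=A\varOmega$, i.e. $A\varOmega\in\ker H=\mathbb C\varOmega$, say $A\varOmega=c\varOmega$.

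Finally, $A-cI\in\mathcal N'$ annihilates $\varOmega$, and $\varOmega$ is separating for $\mathcal N'$, whence $A=cI$. The main obstacle is the second step: the bare microcausality argument only yields commutation with the tube algebra, which does not by itself force $A$ to be central (the tube around the $a$-axis cannot be reached by time-translating regions spacelike to $O$). It is precisely the spectrum condition---entering through the two-sided analytic continuation of $\phi$ and the resulting Liouville argument, the same physical input as in Lem.\ref{lem: 10}---that upgrades ``$A$ commutes with the tube'' to $A\varOmega\propto\varOmega$ and hence to triviality; without $H\geq 0$ the conclusion is false.
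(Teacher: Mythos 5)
Your proof is correct and takes essentially the same route as the paper's: microcausality gives vanishing of $[\tau_{t}(A),B]$ near $t=0$, the spectrum condition makes $t\mapsto\left\langle B\varOmega,e^{itH}A\varOmega\right\rangle $ boundedly analytic in the upper half-plane and its rewritten form analytic in the lower half-plane, gluing plus Liouville forces constancy, and Reeh--Schlieder density together with uniqueness of the vacuum yield $A\varOmega\propto\varOmega$, hence $A\in\mathbb{C}I$ by the separating property. Your two organizational variants---first extending the commutation to all $t$ via the identity theorem on the strip and then invoking edge-of-the-wedge (where the paper instead glues the two half-plane functions directly across the strip of analyticity of $A$ on the small interval), and routing the final step through the commutant of the timelike-tube algebra rather than the separating property of $\varOmega$ for the local algebra itself---do not change the substance of the argument, and in fact make explicit the last step that the paper leaves implicit.
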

\begin{proof}
Let $A$ be analytic under time translations element which is also
localized in a bounded region $O$. Let $B$ be localized in the spacelike
complement of an $\varepsilon$-neighbourhood of $O$. Then the commutator,
\[
\left[\tau_{t}(A),B\right]
\]
vanishes in an open interval of $t$. For $\left|t\right|<\varepsilon$,
we have
\[
\left\langle B^{*}\varOmega,\tau_{t}(A)\varOmega\right\rangle =\left\langle \tau_{t}(A^{*})\varOmega,B\varOmega\right\rangle ,
\]

where $\varOmega\in\mathcal{H}$ is the vacuum vector.

By the spectrum condition, which demands the generator of the time
translations to be positive, the term on the left hand side defines
a function that is analytic in the upper halfplane while the term
on the right hand side, an analytic function in the lower halfplane.
By assumption, they are also analytic in a strip around the real axis.
Since these analytic functions coincide in an open interval of the
real axis, we obtain an entire bounded analytic function which, therefore,
is constant.

By the Reeh-Schlieder Theorem, these observables $B$ generate from
the vacuum $\varOmega$ a dense set of vectors in the Hilbert space
$\mathcal{H}$. Thus, $A\varOmega$ is invariant under time evolution
and, due to the uniqueness of the vacuum,
\[
A\varOmega=\left\langle \varOmega,A\varOmega\right\rangle \varOmega.
\]

Thereby, we get the result that an analytic local observable is a
multiple of the identity of the algebra.
\end{proof}

\paragraph*{\textsc{Acknowledgments:} The author expresses his gratitude to Prof.
Dr. K. Fredenhagen for his substantial help. }

\end{document}